\documentclass[12pt, letterpaper]{article}
\usepackage[utf8]{inputenc}

\usepackage{hyperref}
\usepackage{amsmath}
\usepackage{amssymb}
\usepackage{amsthm}
\usepackage{stmaryrd}
\usepackage[all]{xy}
\usepackage[margin=2.75cm]{geometry}

\theoremstyle{definition}
\newtheorem{theorem}{Theorem}
\newtheorem{definition}[theorem]{Definition}
\newtheorem{lemma}[theorem]{Lemma}

\theoremstyle{remark}
\newtheorem{remark}[theorem]{Remark}
\newtheorem{example}[theorem]{Example}

\def\eg{{\em e.g.}}
\def\cf{{\em cf.}}

\newcommand{\BP}{\mathbf{P}}
\newcommand{\BC}{\mathbf{C}}
\newcommand{\FOMC}{\text{\sf FOMC}}
\newcommand{\dom}{\text{\rm dom}}

\newcommand{\LQ}{\text{\rm?`}}
\newcommand{\RQ}{\text{\rm?}}

\newcommand{\M}{\mathcal{M}}
\newcommand{\N}{\mathcal{N}}
\renewcommand{\P}{\mathcal{P}}

\renewcommand{\phi}{\varphi}
\renewcommand{\emptyset}{\varnothing}

\title{\texorpdfstring{First-Order Modal $\xi$-Calculus\\On the Aspects of Application and Bisimulation}{First-Order Modal xi-Calculus: On the Aspects of Application and Bisimulation}}
\author{Xinyu Wang\qquad s2010404@jaist.ac.jp}
\date{School of Information Science\\Japan Advanced Institute of Science and Technology\\Asahidai 1--1, Nomi City, Ishikawa Prefecture, Japan}

\begin{document}

\begin{titlepage}
\maketitle\thispagestyle{empty}

\begin{abstract}
\noindent This paper proposes first-order modal $\xi$-calculus as well as genealogical Kripke models. Inspired by modal $\mu$-calculus, first-order modal $\xi$-calculus takes a quite similar form and extends its inductive expressivity onto a different dimension. We elaborate on several vivid examples that demonstrate this logic's profound utility, especially for depicting genealogy of concurrent computer processes. Bisimulation notion for the logic has also been thoroughly examined.
\end{abstract}

\textbf{Keywords}\qquad first-order modal logic, modal $\mu$-calculus, Kripke model, bisimulation, concurrency.
\end{titlepage}

\section{Introduction}\label{sec.int}

Modal $\mu$-calculus, first proposed by Kozen in~\cite{Kozen82,Kozen83}, has been well known among modal logicians. Equipped with notations similar to $\lambda$-calculus~\cite{Arnold01}, modal $\mu$-calculus serves as an extremely expressive yet still decidable language (in fact EXPTIME-complete, \cf~\cite{Emerson88,Emerson99}) that is able to cover a wide range of variations of modal logic, especially for those study fields which are tightly related to theoretical computer science, such as temporal logic~\cite{Stirling01}, propositional dynamic logic~\cite{Streett81} as well as finite model theory~\cite{Gradel07}.

Rigidly speaking, semantics of modal $\mu$-calculus is commonly defined via Tarski's fixed-point theorem, \cf~\cite{Bradfield06}. However intuitively, \eg, consider the following modal $\mu$-calculus formula $\nu Z.\phi\land[a]Z$, where $\nu$ is the dual of $\mu$. Here, what this $\nu Z.$ operator actually does is assigning the rest part of the formula, namely $\phi\land[a]Z$, to a formula variable $Z$. Hence, the subformula $\phi\land[a]Z$ is now able to talk about itself via this formula variable $Z$. Through such self-reference, our ``core'' formula $\phi$ gets recursively ``passed down'' along the binary relation $R_a$, so the original formula $\nu Z.\phi\land[a]Z$ intuitively says that $\phi$ is true along every $R_a$-path. So far so good. Then quite naturally, if we regard the binary relation $R_a$ as a \textit{horizontal} dimension in our Kripke model, we may also wonder whether this core formula $\phi$ can similarly be passed down recursively along other types of routes in the model, \eg, on a \textit{vertical} dimension? Actually our answer is yes, and in parallel with modal $\mu$-calculus, we would like to call this novel kind of modal logic as modal $\xi$-calculus.

So after all, where lies this another \textit{vertical} dimension in modal $\xi$-calculus? Let us turn our attention to first-order modal logic, where we are able to discover another natural hierarchy, namely between the whole Kripke model $\M$ itself and any element $a$ that belongs to the (constant) domain of $\M$. We may also note that, \eg, there exists arbitrary unary predicate $P$ so that $Pa$ forms an atomic formula of normal first-order modal logic, but surely, it will be much better if such unary predicates can be \textit{intrinsically} constructed, rather than \textit{extrinsically} designated as extra symbol like $P$. Thus, what if $a$ is a Kripke model as well, exactly the same type as $\M$ so that the hierarchy between $\M$ and $a$ is truly a well-defined \textit{vertical} dimension, and more importantly, any formula can also be evaluated in the model $a$ just as in the model $\M$ so that a formula itself can simply act like an unary operator! Moreover, since now $a$ is the same type of Kripke model like $\M$, then just as $\M$ possesses a domain which contains $a$, of course $a$ also possesses its own domain, and once again, any element in $a$'s domain is also the same type of Kripke model. Through such recursion, a tree-like \textit{vertical} genealogy gets established among a bunch of Kripke models, and so we call this type of Kripke model as genealogical Kripke model. Furthermore, since now any formula acts like an unary predicate, then just as modal $\mu$-calculus, once a formula is equipped with the ability to talk about itself, it can similarly get passed down recursively on the \textit{vertical} dimension, namely along a sequence like $a\in\dom(\M)$, $a'\in\dom(a)$, $\cdots$. Thus more precisely, we call our logic as first-order modal $\xi$-calculus, abbreviated as $\FOMC$. For new notations that we adopt, loosely speaking, if $\phi$ is an $\FOMC$-formula, then adding a pair of question marks around it produces an unary predicate as $\LQ\phi\RQ$, hence $\LQ\phi\RQ a$ is also an $\FOMC$-formula with a structure parallel to that of $Pa$. Also, we introduce the $\xi X.$ operator as $\xi X.\phi$, very similar to modal $\mu$-calculus.

Importantly, here may we provide a few useful tips for readers to follow this paper more smoothly. In fact, no matter how detailedly we manage to explain our na\"{i}ve intuitions behind first-order modal $\xi$-calculus and genealogical Kripke models, instinctive inaccuracy of natural language might still present obstacles toward full comprehension. Therefore, readers are strongly recommended not only to carefully digest all the crucial definitions in Section~\ref{sec.pre}, but also to frequently take some preview on those vivid examples in Section~\ref{sec.exam}, which will certainly help to navigate readers' understanding through intuitive pictures. Generally speaking, although the formal definition may seem a little complex at first, once having completely understood, readers will certainly agree that first-order modal $\xi$-calculus is nothing but just a quite natural mathematical generalization of common modal $\mu$-calculus.

Speaking of practical application, we firstly have to determine how to interpret genealogical Kripke models with suitable realistic meaning. Probably, computer scientists feel it tempting to interpret a model $\M$ as a process on a modern computer, $\M$'s (possibly empty) domain contains all of its children processes, and every child process is again represented by a genealogical Kripke model. This is exactly how a modern computer operating system (such as Linux) commonly handles multiple concurrent processes: on the one hand, there exists a genealogical hierarchy between the parent process and the child process; on the other hand, every process is technically represented by the same data structure and thus applicable to all the same properties, namely in logic, all the same unary predicates. While a lot of other interpretations can also fit into our first-order modal $\xi$-calculus pretty well, within this paper we shall mainly focus on interpreting as computer processes to help to illustrate our logic more vividly. Meanwhile, the other \textit{horizontal} dimension, namely the binary relation $R$ in a genealogical Kripke model, is usually interpreted as temporal logic's chronological future for computer processes, and thus often assumed to be reflexive and transitive~\cite{Gabbay94}. Nevertheless, it may not always be so under every condition, and hence for the most general purpose we choose not to presume any restriction on $R$.

The rest of this paper is organized as the following. Section~\ref{sec.pre} strictly defines mathematical preliminaries. Section~\ref{sec.exam} provides several concrete examples to illustrate powerful practical usage of first-order modal $\xi$-calculus. Section~\ref{sec.bis} studies bisimulation of the logic. Section~\ref{sec.conc} concludes this paper and proposes some feasible directions for future work.

\section{Preliminaries}\label{sec.pre}

This section defines the necessary preliminaries. Should readers find any part of the content difficult to understand, referring to some examples in Section~\ref{sec.exam} as a vivid assistance is strongly suggested.

Also, paragraphs started with a $\bigstar$ mark contain important clarifications. Honestly speaking, this paper indeed includes a handful of concepts and definitions about which readers might be vulnerable to getting confused, perhaps either because the formal definition seems a bit complex (but actually quite natural and intuitive once fully understood), or because it is somewhat different from the ordinary definition (while noticing such subtle difference may prove to be crucial for correct understanding). Thus those clarifications are particularly prepared with very careful elaboration, just in case of various kinds of possible confusion, and so readers are barely encouraged to simply neglect them but had better read those clarifications as one organic, comprehensive part of this entire paper.

\begin{definition}[Formula in Language $\FOMC$]\label{def.for}
Given a countable set of propositional letters $\BP$, and a countable set of constant symbols $\BC$, well formed formula $\phi$ in Language $\FOMC$ is recursively defined as the following BNF:

\begin{align*}
    \phi::=X\mid\top\mid p\mid\LQ\phi\RQ x\mid\LQ\phi\RQ c\mid\neg\phi\mid(\phi\land\phi)\mid\Box\phi\mid\forall x.\phi\mid\xi X.\phi
\end{align*}

where $p\in\BP$, $c\in\BC$, $x$ stands for arbitrary model variable and $X$ stands for arbitrary formula variable. In the rest of this paper, we will follow this routine to use lowercase letters for model variables and uppercase letters for formula variables.

$\perp$, $\lor$, $\to$, $\Diamond$ and $\exists$ are abbreviations defined as common.
\end{definition}

From Definition~\ref{def.for} we can tell that in Language $\FOMC$, there exist two distinct kinds of variables: model variables like $x,y$ and formula variables like $X,Y$. There can also exist some constant symbols like $c,d\in\BC$, which may be called as model constants as well. Thus intuitively, as their names literally suggest, a model variable or a model constant stands for a model (precisely speaking a genealogical Kripke model, introduced later in Definition~\ref{def.model}), while a formula variable represents an $\FOMC$-formula. Readers will gradually become aware of all these notations' respective indispensable function in our logic.

For now let us firstly pay our attention to one issue: usually in logic, variables in a formula may be either free or bound, and so are they here, too. Roughly speaking in an $\FOMC$-formula, a $\forall x.$ operator binds a model variable $x$, while a $\xi X.$ operator (plus some extra condition) binds a formula variable $X$. The formal definition is as the following:

\begin{definition}[Free Variable and Bound Variable]\label{def.bound}
In any given $\FOMC$-formula $\phi$, a model variable $x$ either is bound by the closest $\forall x.$ whose scope includes this $x$; otherwise it is a free model variable, namely, not within any $\forall x.$'s scope.

In any given $\FOMC$-formula $\phi$, a formula variable $X$ either is bound, iff it is firstly within the scope of the closest $\LQ\RQ$ pair, and then bound by the closest $\xi X.$ whose scope includes this $\LQ\RQ$ pair; otherwise it is a free formula variable.
\end{definition}

\begin{example}\label{eg.bound}
For a model variable $x$, whether free or bound is just like ordinary first order logic, \eg, $x$ is bound in $\FOMC$-formula $\forall x.\LQ p\RQ x$ and is free in $\FOMC$-formula $\LQ p\RQ x$.

For a formula variables $X$, however, it must firstly be within the scope of a $\LQ\RQ$ pair, after which could it be bound by some $\xi X.$ whose scope includes that $\LQ\RQ$ pair. This ``extra condition'' requires very careful attention, \eg, $X$ is bound in $\FOMC$-formula $\xi X.\LQ X\RQ c$ but is free in $\FOMC$-formula $\xi X.X$ because it is not within the scope of any $\LQ\RQ$ pair.
\end{example}

\phantom{nothing}

\noindent$\bigstar$ \textbf{Important Clarification} Note that from Definition~\ref{def.bound}, a bound formula variable is not entirely similar to an ordinary bound model variable, \eg, from Example~\ref{eg.bound} we know that $\xi X.X$ does \textit{not} bind its $X$. Intuitively, such stipulation is meant to ensure that circular evaluation will not occur when we later define semantics in Definition~\ref{def.sem}.

Together with the concept of free and bound variables arises the concept of logic sentences. The general idea as such is still quite common: a sentence, also called as a closed formula, is usually defined as a formula without any free variables. However here due to somewhat tangled nature of recursion in $\FOMC$-formulae, only after undergoing very careful analysis can we finally settle down to the following precise conditions in the definition of $\FOMC$-sentences:

\begin{definition}[Sentence in Language $\FOMC$]\label{def.sen}
An $\FOMC$-formula $\phi$ is an $\FOMC$-sentence, iff all of the following conditions hold:

\begin{itemize}
    \item $\phi$ contains neither free model variables nor free formula variables.
    \item If $\xi X.\psi$ is a subformula of $\phi$, then $\xi X.\psi$ contains neither free model variables nor free formula variables.
    \item If $\LQ\psi\RQ x$ or $\LQ\psi\RQ c$ is a subformula of $\phi$, then $\psi$ contains no free model variables.
\end{itemize}
\end{definition}

\noindent$\bigstar$ \textbf{Important Clarification} Note that for any $\FOMC$-formula $\psi$, of course, $\psi$ contains no free model (formula, resp.) variables iff all the model (formula, resp.) variables in $\psi$ are bounded in $\psi$ \textit{per se}. (i.e., it does not mean bounded in $\phi$ even if $\psi$ is a subformula of $\phi$, as in the above Definition~\ref{def.sen}.) Such a narrative convention is indeed natural, and so will be kept throughout this paper without further explicit mentioning.

\begin{remark}\label{rem.sen}
Actually, the formal logical language studied in this paper is $\FOMC$-sentences in Definition~\ref{def.sen}, rather than $\FOMC$-formulae in Definition~\ref{def.for}. Essentially, Definition~\ref{def.for} defines $\FOMC$-formulae by BNF, only as nothing more than an auxiliary notion. Those extra restrictions in Definition~\ref{def.sen} will later on ensure a well-defined semantics (\cf~Theorem~\ref{th.sem}).
\end{remark}

Next, let us turn to the semantical aspect and define genealogical Kripke models, with respect to which semantics of $\FOMC$-sentences will be interpreted:

\begin{definition}[Genealogical Kripke Model]\label{def.model}
A genealogical Kripke model $\M$ is recursively defined as a tuple $(S_\M,R_\M,V_\M,N_\M,I_\M,T_\M)$ where:

\begin{itemize}
    \item $S_\M$ is a non-empty set of possible worlds.
    \item $R_\M\subseteq S_\M\times S_\M$ is a binary relation on $S_\M$.
    \item $V_\M:\BP\to\P(S_\M)$ is a valuation function.
    \item $N_\M$ is a set of genealogical Kripke models.
    \item $I_\M:S_\M\times\BC\to N_\M$ is a partial assignment function.
    \item $T_\M:S_\M\times N_\M\to\bigcup\limits_{\N\in N_\M}S_\N$ is a function such that for any possible world $s\in S_\M$ and any model $\N\in N_\M$, $T_\M(s,\N)\in S_\N$.
\end{itemize}

When the model $\M$ is clear from the context, we can omit $\M$ in the subscript and hence denote the model $\M$ simply as a tuple $(S,R,V,N,I,T)$.
\end{definition}

In case of any misunderstanding about the above Definition~\ref{def.model}, we would like to present (a little longer) clarification as well as several technical remarks right away, while elaboration on the intuitive interpretation of Definition~\ref{def.model} will soon follow after Remark~\ref{rem.model}.

\phantom{nothing}

\noindent$\bigstar$ \textbf{Important Clarification} In Definition~\ref{def.model}, a genealogical Kripke model $\M$ contains some part $N_\M$, which is again a set of genealogical Kripke models. Hence Definition~\ref{def.model} is a recursive definition, just as a logic language is usually defined through BNF. Note that $N_\M$ can be $\emptyset$, which exactly forms the base case of this inductive definition. More specifically speaking, the set-theoretically rigorous version of this definition is transfinitely inductive as follows: at the beginning we define the base class $C_0$, which contains all the genealogical Kripke models $\M$ whose $N_\M$ is $\emptyset$; then for each ordinal $\alpha>0$ we define the inductive class $C_\alpha$, which contains all the genealogical Kripke models $\M$ whose $N_\M$ is a subset of $\bigcup\limits_{\beta<\alpha}C_\beta$; finally the class of all the genealogical Kripke models is $\bigcup\limits_\alpha C_\alpha$, where $\alpha$ ranges over all the ordinals. Therefore by Axiom of Regularity~\cite{Jech03}, circular inclusion will never occur and so we must have $\M\notin N_\M$ for any genealogical Kripke model $\M$; in other words, $N_\M$ is guaranteed to reach the $\emptyset$ dead point after a finite depth of generations, without infinite downward chains.

Just in case of any likely confusion, here we shall elaborate once again. In particular, readers might naturally think it possible for a genealogical Kripke model to have itself as one of its own children models, namely $\M\in N_\M$; or more generally, to contain an infinite downward chain $\cdots\M_3\in N_{\M_2}$, $\M_2\in N_{\M_1}$, $\M_1\in N_\M$. To be honest, we completely agree that these kinds of situations are both conceptually conceivable and mathematically manageable; nonetheless, they simply do \textit{not} appear here because a recursive definition in ZFC automatically rules them out. Such a ``restriction'' is essentially a direct result of the recursive nature of Definition~\ref{def.model}, and thus actually, no extra restriction at all needs to be added into the definition. Readers are suggested to draw an analogy from the recursive BNF in Definition~\ref{def.for}, so that readers can also understand that in a similar way, in Definition~\ref{def.model} we have made neither fault nor negligence. At last, though a formal proof is omitted (which can be easily found in any standard textbook on set theory, \eg,~\cite{Kunen80}), we would like to summarize our above discussion as the following theorem:

\begin{theorem}\label{th.model}
There does not exist an infinite downward chain of genealogical Kripke models $\M_i$, where $i\in\omega$, such that $\forall i\in\omega$, $\M_{i+1}\in N_{\M_i}$.
\end{theorem}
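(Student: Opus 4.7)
The plan is to argue by contradiction via the Axiom of Regularity (Foundation) in ZFC. Suppose there exists an infinite downward chain of genealogical Kripke models $\M_0,\M_1,\M_2,\ldots$ with $\M_{i+1}\in N_{\M_i}$ for every $i\in\omega$. The strategy is to convert this $N$-chain into a strictly decreasing sequence of set-theoretic ranks, which the well-ordering of the ordinals prohibits.

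The first step is to observe that, under any standard set-theoretic encoding of ordered tuples (\eg, iterated Kuratowski pairing), each component of a tuple lies inside the transitive closure of the tuple itself. Applied to $\M_i=(S_{\M_i},R_{\M_i},V_{\M_i},N_{\M_i},I_{\M_i},T_{\M_i})$, this yields $N_{\M_i}\in\mathrm{trcl}(\M_i)$, and therefore $\mathrm{rank}(N_{\M_i})<\mathrm{rank}(\M_i)$. Since $\M_{i+1}\in N_{\M_i}$, we also have $\mathrm{rank}(\M_{i+1})<\mathrm{rank}(N_{\M_i})$. Combining, $\mathrm{rank}(\M_{i+1})<\mathrm{rank}(\M_i)$ holds for every $i\in\omega$, producing an infinite strictly decreasing sequence of ordinals and hence the desired contradiction.

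An alternative, essentially equivalent, route is to apply Foundation directly to the set $A=\{\M_i:i\in\omega\}$, obtaining an $\in$-minimal element $\M_k\in A$. One then uses the tuple encoding to exhibit $\M_{k+1}$ as a member of (the transitive closure of) $\M_k$, ultimately contradicting minimality after intersecting with $A$. I would favor the rank argument because it avoids any case analysis on the shape of the encoding and because it makes transparent why the same reasoning underlies well-foundedness of the BNF in Definition~\ref{def.for}, the parallel the author explicitly draws in the preceding clarification.

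The main technical nuisance — really the only one — is that the step $\mathrm{rank}(N_{\M_i})<\mathrm{rank}(\M_i)$ depends on the fact that all components of an ordered tuple appear strictly below it in the cumulative hierarchy. This is a routine fact for any reasonable encoding and is typically invoked without comment in set-theory texts, which matches the author's decision to cite~\cite{Kunen80} rather than prove the theorem in full. So I expect no genuine obstacle, only the need to state this rank lemma cleanly before applying it.
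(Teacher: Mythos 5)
Your proposal is correct and follows exactly the route the paper indicates: the paper explicitly omits a formal proof, appealing to the Axiom of Regularity and deferring to a set-theory textbook, and your rank argument is a standard, correct way of carrying that appeal out, since each component of the tuple $\M_i$ sits strictly below $\M_i$ in the cumulative hierarchy, so $\mathrm{rank}(\M_{i+1})<\mathrm{rank}(N_{\M_i})<\mathrm{rank}(\M_i)$ would yield an infinite strictly decreasing sequence of ordinals. Nothing essential is missing.
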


\begin{remark}\label{rem.model}
The following are several other minor points concerning Definition~\ref{def.model}, to which readers are suggested to pay attention as well:

\begin{itemize}
    \item $\N$ and $N$ are different: the former denotes some model, while the latter denotes some part of a model, which is a set of (other) models.
    \item On the one hand, $I$ is generally a \textit{partial} function, i.e., it is not guaranteed to be total. In fact, since the codomain $N$ may be empty, sometimes $I$ is simply impossible to be total.
    \item On the other hand, $T$ is a \textit{total} function, and when its codomain is empty, it is easy to see that its domain must also be empty, so this is not a problem.
\end{itemize}
\end{remark}

Having ensured that mathematically, readers can grasp a correct understanding about Definition~\ref{def.model}, we next provide a brief intuitive explanation on what every part of a genealogical Kripke model stands for. Assume that we interpret the genealogical Kripke model $\M$ as a computer process, then $N_\M$ represents all its children processes, and at every possible state $s\in S_\M$ of the parent process $\M$, for any constant symbol $c\in\BC$, $I_\M(s,c)\in N_\M$ (if defined) selects a specific child process; for any child process $\N\in N_\M$, $T_\M(s,\N)\in S_\N$ designates in which state this child process $\N$ currently is. Therefore fixing a constant symbol $c\in\BC$, its interpretation $I_\M(s,c)$ still depends on $s$ and so may vary from possible world to possible world, thus even though we decide to call $c$ as a model ``constant'', readers should keep aware that $c$ is after all a \textit{local} constant rather than a \textit{global} one. (Of course if needed, we are also able to easily introduce another notion of global model constants.)

Anyway, since as of data structure, both the parent process $\M$ and the child process $\N$ are the same type of object, namely a process, it makes perfect sense that both of them are represented by a genealogical Kripke model so that they share exactly the same mathematical form. (By the way, $\N$ may also have its own children processes as well, which are therefore $\M$'s grandchildren processes.) Here then arises a quite natural question, however: should there be any correlation between the parent process's current state $s\in S_\M$ and the child process's current state $T_\M(s,\N)\in S_\N$, where $\N\in N_\M$? We think the answer depends: in some cases it is quite plausible to assume such kind of relation, while in some other cases both the parent process and the children processes can act independently. Therefore, for the most general purpose, here we choose not to put any extra restrictions upon the relation between the parent process and the children processes.

We now move on to define semantics of $\FOMC$-sentences. To start with, we define interpretations both for any model variable $x$ and for any formula variable $X$:

\begin{definition}[Interpretation]
An interpretation $i$ is a (partial) function. For any $(k,v)\in i$, intuitively, $i$ interprets the key $k$ as the value $v$. We also denote:

\begin{align*}
    i[k:=v]=\left\{\begin{array}{ll}
        i\cup\{(k,v)\}, & \quad\text{if }i(k)\text{ is undefined}\\
        (i\setminus\{(k,v')\})\cup\{(k,v)\}, & \quad\text{if }i(k)=v',\text{ namely }(k,v')\in i
    \end{array}\right.
\end{align*}
\end{definition}

\begin{definition}[Semantics]\label{def.sem}
Given a genealogical Kripke model $\M$ and an $\FOMC$-sentence $\phi$, valuation of $\phi$ over $\M$ (with two interpretations $i$ and $j$, the first one for model variables while the second one for formula variables, and initially $i=j=\emptyset$) is recursively defined as $\llbracket\phi\rrbracket^\M_{\emptyset,\emptyset}\subseteq S_\M$ by the following:

\begin{align*}
    \llbracket X\rrbracket^\M_{i,j}= & \llbracket\psi\rrbracket^\M_{i,j},\text{ where }j(X)=\psi\\
    \llbracket\top\rrbracket^\M_{i,j}= & S_\M\\
    \llbracket p\rrbracket^\M_{i,j}= & V_\M(p)\\
    \llbracket\LQ\phi\RQ x\rrbracket^\M_{i,j}= & \{s\in S_\M\mid T_\M(s,\N)\in\llbracket\phi\rrbracket^\N_{\emptyset,j}\},\text{ where }i(x)=\N\\
    \llbracket\LQ\phi\RQ c\rrbracket^\M_{i,j}= & \{s\in S_\M\mid I_\M(s,c)\text{ is defined and }T_\M(s,I_\M(s,c))\in\llbracket\phi\rrbracket^{I_\M(s,c)}_{\emptyset,j}\}\\
    \llbracket\neg\phi\rrbracket^\M_{i,j}= & S_\M\setminus\llbracket\phi\rrbracket^\M_{i,j}\\
    \llbracket\phi\land\psi\rrbracket^\M_{i,j}= & \llbracket\phi\rrbracket^\M_{i,j}\cap\llbracket\psi\rrbracket^\M_{i,j}\\
    \llbracket\Box\phi\rrbracket^\M_{i,j}= & \{s\in S_\M\mid\text{for all }t\in S_\M\text{ such that }sR_\M t,t\in\llbracket\phi\rrbracket^\M_{i,j}\}\\
    \llbracket\forall x.\phi\rrbracket^\M_{i,j}= & S_\M\cap(\bigcap_{\N\in N_\M}\llbracket\phi\rrbracket^\M_{i[x:=\N],j})\\
    \llbracket\xi X.\phi\rrbracket^\M_{i,j}= & \llbracket\phi\rrbracket^\M_{i,j[X:=\phi]}
\end{align*}

For any genealogical Kripke model $\M$, any possible world $s\in S_\M$ and any $\FOMC$-sentence $\phi$, we denote $\M,s\vDash\phi$ iff $s\in\llbracket\phi\rrbracket^\M_{\emptyset,\emptyset}$.
\end{definition}

Before proving soundness of Definition~\ref{def.sem} in Theorem~\ref{th.sem}, we would like to elaborate further on what on earth this definition of semantics intuitively captures:

\begin{itemize}
    \item For $\llbracket X\rrbracket^\M_{i,j}$, if $j$ interprets the formula variable $X$ as $\psi$, then valuation of $X$ just equals to valuation of $\psi$. Later on we will show that in such a case $j(X)$ is guaranteed to be defined.
    \item For $\llbracket\LQ\phi\RQ x\rrbracket^\M_{i,j}$, the idea is to treat $\LQ\phi\RQ$ as a unary predicate, hence if $i$ interprets the model variable $x$ as some child model $\N\in N_\M$, then for any parent model $\M$'s state $s\in S_\M$, $\LQ\phi\RQ x$ is true at $\M,s$ iff $\phi$ is true at the child model $\N$'s current state $T_\M(s,\N)$, namely $T_\M(s,\N)\in\llbracket\phi\rrbracket^\N_{\emptyset,j}$. Note that the interpretation $j$ for formula variables gets inherited but the interpretation $i$ for model variables is reset to $\emptyset$, because between two models $\M$ and $\N$, their language $\FOMC$ keeps the same but their domains $N_\M$ and $N_\N$ are separate. Later on we will also show that in such a case $i(x)$ is guaranteed to be defined.
    \item For $\llbracket\LQ\phi\RQ c\rrbracket^\M_{i,j}$, we simply attempt to interpret the constant symbol $c$ as a child model $I_\M(s,c)\in N_\M$ and then we can follow the similar idea as above.
    \item For $\llbracket\forall x.\phi\rrbracket^\M_{i,j}$, all possible interpretations for the model variable $x$ as some child model $\N\in N_\M$ gets considered and added into the interpretation $i$, respectively. Then the compound universal formula values as the interaction of all possibilities, also together intersected with $S_\M$ just in case that $N_\M$ may be $\emptyset$.
    \item For $\llbracket\xi X.\phi\rrbracket^\M_{i,j}$, just like modal $\mu$-calculus, this $\xi X.$ operator here intuitively assigns the following formula $\phi$ to the formula variable $X$, thus this interpretation gets added into $j$.
\end{itemize}

\phantom{nothing}

\noindent$\bigstar$ \textbf{Important Clarification} Readers familiar with semantics of modal $\mu$-calculus may wonder, why we have not made use of Tarski's fixed-point theorem in the definition of semantics, just as modal $\mu$-calculus normally does. The reason is that from Theorem~\ref{th.model}, we already know that a genealogical Kripke model can only have a finite depth of generations, hence the fixed point can simply be found by induction through finitely many steps. We also point out that unlike modal $\mu$-calculus, here $\phi$ in $\xi X.\phi$ does not have to be positive, either. Despite these minor technical differences, after all, our first-order modal $\xi$-calculus indeed generalizes the ordinary modal $\mu$-calculus onto the \textit{vertical} dimension through keeping the very core idea totally intact. Readers who may still possess doubts about this point hitherto are extremely eagerly suggested, in any case, to carefully reread Section~\ref{sec.int} --- the very beginning introductory part --- and especially the first several paragraphs, so as to amply convince themselves with quite intuitive justification.

\begin{remark}
Another point we would like to explain about the semantics in Definition~\ref{def.sem} is, why we always designate that $\M,s\nvDash\LQ\phi\RQ c$ when $I_\M(s,c)$ is not defined. This might simply seem like our factitious choice, but actually we have some good reasons to justify it. Let us interpret genealogical Kripke models as computer processes, then $I_\M(s,c)$ is essentially a process pointer that either points to one of $\M$'s children processes $\N\in N_\M$, or is the \textbf{NULL} pointer, which can be regarded as pointing to certain fixed \textbf{NULL} process. Therefore if $I_\M(s,c)$ is undefined, then for any fixed $\FOMC$-formula $\phi$, we should anticipate either always $s\in\llbracket\LQ\phi\RQ c\rrbracket^\M_{i,j}$ or always $s\notin\llbracket\LQ\phi\RQ c\rrbracket^\M_{i,j}$, i.e., depending on how this default \textbf{NULL} process is designed in advance but not depending on the specific current state $s\in S_\M$. Here we just arbitrarily designate that $s\notin\llbracket\LQ\phi\RQ c\rrbracket^\M_{i,j}$ for every $\phi$, which does not really cause an issue, because the expected behavior of $\FOMC$-formula $\LQ\phi\RQ c$ when letting $s\in\llbracket\LQ\phi\RQ c\rrbracket^\M_{i,j}$ can be exactly simulated by $\FOMC$-formula $\neg\LQ\neg\phi\RQ c$ when letting $s\notin\llbracket\LQ\neg\phi\RQ c\rrbracket^\M_{i,j}$.
\end{remark}

At last, before finishing this section of mathematical preliminaries, we have to complete the indispensable task of proving the following Theorem~\ref{th.sem}, showing that our semantics is surely well defined with respect to $\FOMC$-sentences.

\begin{theorem}\label{th.sem}
Semantics in Definition~\ref{def.sem} is well defined.
\end{theorem}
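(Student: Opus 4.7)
The plan is to prove well-definedness by a double well-founded induction. Define the rank $\rho(\M) = \sup\{\rho(\N)+1 \mid \N \in N_\M\}$; by Theorem~\ref{th.model} this is a well-defined ordinal, with $\rho(\M) = 0$ exactly when $N_\M = \emptyset$. The outer induction is on $\rho(\M)$, and the inner induction is on a formula-size measure at the fixed model.

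I would first establish two bookkeeping invariants on the interpretations $i, j$ arising during any recursion that begins from a sentence: for every invocation $\llbracket\phi\rrbracket^\M_{i,j}$, $\dom(i)$ contains every free model variable of $\phi$, and $\dom(j)$ contains every formula variable $X$ that has at least one occurrence in $\phi$ lying outside every $\LQ\RQ$-pair of $\phi$. These invariants ensure that $i(x)$ in the $\LQ\phi\RQ x$ clause and $j(X)$ in the $X$ clause are always defined. They follow by routine induction on the recursion, using all three conditions of Definition~\ref{def.sen}: the closedness of the outer sentence seeds the invariants, the second condition (closedness of $\xi X.\psi$-subformulas) supports the $\xi$ clause's update $j[X:=\phi]$, and the third condition (no free model variables inside $\LQ\RQ$) justifies resetting $i$ to $\emptyset$ when crossing into a sub-model.

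For termination, consider any chain of recursive calls. Each application of a $\LQ\RQ$-clause moves from $\M$ to some $\N \in N_\M$, strictly decreasing $\rho$, so by Theorem~\ref{th.model} only finitely many $\LQ\RQ$-crossings can occur on any such chain. After the last crossing the model is fixed at some $\M$, and reductions proceed via the Boolean, modal, quantifier, $\xi$, and $X$ clauses only. For this tail I would define a measure $\mu(\phi, j)$ by counting the symbols of $\phi$ with every $\LQ\chi\RQ$ treated as a single atom, and charging each top-level occurrence of $X \in \dom(j)$ a weight equal to $\mu(j(X), j')$ for a suitably restricted $j'$. The decisive point is Definition~\ref{def.bound}: a bound formula variable $X$ must sit inside a $\LQ\RQ$-pair, so the skeleton of $j(X)$ contains no top-level $X$. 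A case analysis then shows that $\mu$ strictly decreases along every non-$\LQ\RQ$ reduction, contradicting the assumed infinite tail.

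The main obstacle will be the rigorous construction of $\mu$, since the body $j(X)$ may mention other variables $Y \in \dom(j)$ whose bodies in turn mention further variables. I would order $\dom(j)$ by the $\xi$-nesting of the binders in the original sentence (which has only finitely many $\xi X.$-subformulas) and define $\mu$ by induction on this well-founded order together with formula structure; the skeleton-freeness provided by Definition~\ref{def.bound} then yields strict descent at each unwinding, so that $\mu$ is a well-defined natural number and the termination argument goes through.
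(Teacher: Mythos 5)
Your architecture matches the paper's: part one shows the interpretations $i$ and $j$ are always defined when consulted, using the three conditions of Definition~\ref{def.sen}; part two shows termination by combining the well-foundedness of the generation hierarchy (Theorem~\ref{th.model}) with an analysis of what can happen between two consecutive $\LQ\RQ$-crossings. However, your invariant on $j$ is stated too weakly to survive the ``routine induction'' you appeal to. You require $\dom(j)$ to contain only those formula variables with an occurrence in $\phi$ \emph{outside} every $\LQ\RQ$-pair of $\phi$. Now consider the step from $\llbracket\LQ\psi\RQ x\rrbracket^{\M}_{i,j}$ to $\llbracket\psi\rrbracket^{\N}_{\emptyset,j}$: a variable $Y$ occurring in $\psi$ outside every $\LQ\RQ$-pair \emph{of $\psi$} sits inside the outer $\LQ\RQ$-pair of $\LQ\psi\RQ x$, so the hypothesis for the parent call says nothing about $Y\in\dom(j)$, while the invariant for the child call demands it --- and $j$ is merely inherited, not refreshed, at this step. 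Concretely, for the sentence $\xi X.\LQ X\RQ c$ your invariant is vacuous at every stage until the call $\llbracket X\rrbracket^{\N}_{\emptyset,j}$, where it abruptly requires $X\in\dom(j)$; the induction does not close. The repair is to drop the $\LQ\RQ$ proviso and demand instead that $\dom(j)$ contain every formula variable occurring anywhere in $\phi$ that is not under a $\xi$-binder occurring inside $\phi$ itself; this version propagates through every clause, and the unfolding case $X\mapsto j(X)=\psi$ closes because the second condition of Definition~\ref{def.sen} forces $\xi X.\psi$ to have no free formula variables.

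On termination, your measure is over-engineered, and the difficulty you flag as ``the main obstacle'' is a non-problem. The decisive fact is not merely that the skeleton of $j(X)$ contains no top-level $X$ (which is all you extract from Definition~\ref{def.bound}) but that it contains no top-level formula variable \emph{whatsoever}: since $\xi X.\psi$ is a subformula of a sentence, every formula-variable occurrence in $\psi$ is bound, hence lies inside a $\LQ\RQ$-pair. Consequently the ``weights'' in your $\mu$ degenerate to plain symbol counts, along any branch at a fixed model at most one unfolding $X\mapsto j(X)$ can occur before the next $\LQ\RQ$-crossing, and no ordering of $\dom(j)$ is needed. (It is in any case doubtful that your $\xi$-nesting order yields a well-founded definition of $\mu$ in the hypothetical situation you worry about, since the outer binder's body contains the inner binder and hence its variable; fortunately Definition~\ref{def.sen} makes that situation unrealizable.) This shortcut is exactly the paper's argument: an unfolding forces a strict descent in the genealogy before another bare formula variable can be met, and Theorem~\ref{th.model} bounds the number of descents.
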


\begin{proof}
For any $\FOMC$-sentence $\phi$, we recursively evaluate $\llbracket\phi\rrbracket^\M_{\emptyset,\emptyset}$.

On the one hand, we claim that this evaluating process always goes on well. Namely, we have to make sure that whenever necessary, $i$ and $j$ will always provide interpretation for any model variable $x$ and any formula variable $X$, respectively. This feature is guaranteed by conditions in Definition~\ref{def.sen}, intuitively as the following:

\begin{itemize}
    \item $\phi$ contains neither free model variables nor free formula variables. This clause is just the ordinary closed-formula condition.
    \item
    \begin{enumerate}
        \item If $\xi X.\psi$ is a subformula of $\phi$, then $\xi X.\psi$ contains no free model variables, or equivalently, $\psi$ contains no free model variables. This half clause is needed because $X$ may appear in a subformula of $\psi$ as $\LQ\cdots X\cdots\RQ x$ or $\LQ\cdots X\cdots\RQ c$, namely, to be passed down to a child model. As explained by the intuition of semantics defined in Definition~\ref{def.sem}, only interpretation for formula variables gets inherited by the child model while interpretation for model variables is reset to $\emptyset$, thus, $\psi$ should not contain any free model variables.
        \item If $\xi X.\psi$ is a subformula of $\phi$, then $\xi X.\psi$ contains no free formula variables, either. This half clause will become of use later on, i.e., in the latter on-the-other-hand part of this proof.
    \end{enumerate}
    \item If $\LQ\psi\RQ x$ or $\LQ\psi\RQ c$ is a subformula of $\phi$, then $\psi$ contains no free model variables, due to the same intuitive reason as the above clause.
\end{itemize}

Based on the above intuitive intentions of all the clauses in Definition~\ref{def.sen}, it then immediately becomes just self-evident that $i(x)$ and $j(X)$ will always be properly defined whenever in need during the whole evaluating process.

On the other hand, we claim that this evaluating process will eventually terminate instead of going on infinitely. In fact, the only possible intermediate step when the current subformula may become more complicated is rewriting $\llbracket X\rrbracket^\N_{i,j}$ to $\llbracket\psi\rrbracket^\N_{i,j}$. This indicates $j(X)=\psi$, so $\xi X.\psi$ is a subformula of $\phi$, and since $\phi$ is an $\FOMC$-sentence, $\xi X.\psi$ contains no free formula variables. Therefore, appearance of any formula variable $Y$ in $\psi$ must be within the scope of a $\LQ\RQ$ pair, and thus if we continue evaluating $\llbracket\psi\rrbracket^\N_{i,j}$ and later on encounter another intermediate stage $\llbracket Y\rrbracket^{\N'}_{i',j'}$, then $\N'$ must be a direct descendant of $\N$. By Theorem~\ref{th.model}, $\N'\neq\N$ and the maximal depth of generations in model $\M$ is finite. Not only do we need finite steps to descend from $\N$ to one of its descendants $\N'$, but also we have finite depth of generations to descend, namely, neither circle nor infinite chain will occur and thus recursively evaluating $\llbracket\phi\rrbracket^\M_{\emptyset,\emptyset}$ is well founded.
\end{proof}

\section{Examples}\label{sec.exam}

We have seen that theoretically, first-order modal $\xi$-calculus is just a very natural mathematical generalization of \textit{horizontal} modal $\mu$-calculus onto the other \textit{vertical} dimension. While practically, first-order modal $\xi$-calculus can depict the recursive structure of computer processes in a genealogical Kripke model, which will be demonstrated by a series of typical examples in the following that cover some of the most interesting topics in concurrency control~\cite{Breshears09}. Starting easily, we first reveal some simple examples without the $\xi X.$ operator.

\phantom{nothing}

\noindent$\bigstar$ \textbf{Important Clarification} As having been introduced in Section~\ref{sec.int}, now the binary relation $R$ is interpreted as temporal logic and thus assumed to be reflexive and transitive~\cite{Goldblatt92}. Therefore for neatness, when drawing the graph for a model we may omit some $R$ relation arrows, but readers should always be aware that the actual relation $R$ must be understood as the reflexive and transitive closure.

\begin{example}\label{exa.de}
As the following graph shows, the parent process $\M$ possesses two children processes $\N_1$ and $\N_2$, with one propositional letter $r$ representing that a process is currently running and one constant symbol $c$ that points to the currently running child process. We have $\M,s_0\vDash\Box\exists x.\LQ r\RQ x\land\neg\exists x.\Box\LQ r\RQ x\land\Box\LQ r\RQ c$, which intuitively reads as: there always exists some running child process, but no single child process is running forever, and it is always true that the currently running child process is currently running. Divergence between the first two clauses is well known in first-order modal logic as \textit{de dicto} vs. \textit{de re}~\cite{Hughes96}, while the last clause simply claims something trivial under our present interpretation.
$$\xymatrix@C=10pt@R=15pt{
\M: & \txt{$s_0$\\$c:=\N_1$}\ar[rr] & & \txt{$s_1$\\$c:=\N_2$}\ar[rr] & & \txt{$s_2$\\$c:=\N_1$}\\
\N_1,running\ar@{--}[ur] & \N_2,stopped\ar@{--}[u] & \N_1,stopped\ar@{--}[ur] & \N_2,running\ar@{--}[u] & \N_1,running\ar@{--}[ur] & \N_2,dead\ar@{--}[u]\\
\N_1: & \txt{$stopped$\\$\neg r$}\ar@<1ex>[r] & \txt{$running$\\$r$}\ar@<1ex>[l]\ar[r] & \txt{$dead$\\$\neg r$}\\
\N_2: & \txt{$stopped$\\$\neg r$}\ar@<1ex>[r] & \txt{$running$\\$r$}\ar@<1ex>[l]\ar[r] & \txt{$dead$\\$\neg r$}
}$$
\end{example}

\begin{example}
As the following graph shows, $\N_1$ and $\N_2$ are two children processes that require both resources $a$ and $b$ to run. $\N_1$ asks for resource $a$ first while $\N_2$ does the opposite, hence in theory we can imagine that a dead lock easily occurs when $\N_1$ occupies resource $a$ and $\N_2$ occupies resource $b$. Nonetheless, in this actual case, $\N_1$ and $\N_2$'s common parent process $\M$ manages to forbid certain combinations of states so as to prevent any possible dead lock. Therefore we have $\M,s_0\vDash\exists x.\exists y.(\LQ\Diamond(a\land\neg b)\RQ x\land\LQ\Diamond(\neg a\land b)\RQ y)\land\neg\Diamond\exists x.\exists y.(\LQ a\land\neg b\RQ x\land\LQ\neg a\land b\RQ y)$, namely, dead lock has been avoided under the parent process $\M$'s supervision.
$$\xymatrix@C=10pt@R=15pt{
& & \N_1,stopped\ar@{--}[dr] & \N_2,waiting\ar@{--}[d] & \N_1,stopped\ar@{--}[dr] & \N_2,running\ar@{--}[d]\\
& & & s_3\ar@(ur,l)[rr] & & s_4\ar@(dl,r)[dllll]\\
\M: & s_0\ar@(dr,l)[rr]\ar@(u,l)[urr] & & s_1\ar@(dr,l)[rr] & & s_2\ar@(ul,ur)[llll]\\
\N_1,stopped\ar@{--}[ur] & \N_2,stopped\ar@{--}[u] & \N_1,waiting\ar@{--}[ur] & \N_2,stopped\ar@{--}[u] & \N_1,running\ar@{--}[ur] & \N_2,stopped\ar@{--}[u]\\
\N_1: & \txt{$stopped$\\$\neg a,\neg b$}\ar[r] & \txt{$waiting$\\$a,\neg b$}\ar[r] & \txt{$running$\\$a,b$}\ar@(ul,ur)[ll]\\
\N_2: & \txt{$stopped$\\$\neg a,\neg b$}\ar[r] & \txt{$waiting$\\$\neg a,b$}\ar[r] & \txt{$running$\\$a,b$}\ar@(ul,ur)[ll]
}$$
\end{example}

\begin{remark}
From the examples above readers may have already noticed that actually, genealogical Kripke models are constructed from an external omniscient viewpoint, and we also implicitly assume that all the states of the child process are completely transparent to the parent process. Therefore, the parent process changes its state in accordance with the child process, even when nothing has really changed in the parent process itself. Nevertheless, $\LQ\RQ$ pairs are still necessary for the parent process to query the states of its children processes, and hence as a possible future research, by putting restrictions on formulae we will then be able to limit the parent process' knowledge about its children processes.
\end{remark}

Next, we turn to a handful of more complicated examples, which clearly demonstrate that in practical applications, the introduction of the $\xi X.$ operator enables the very same core formula to be passed among different processes recursively.

\begin{example}
As the following graph shows, process $\M$ is not running until $s_3$ when its two children processes $\N_1$ and $\N_2$ have both finished, and $\N_1$ again has a child process $\N_3$ whose finish must be waited, too. Due to limit of space here we only draw out one possible history of execution as the root model $\M$. Anyway, we can see that there is always some process running, which can be formulated as $\M,s_0\vDash\Box\xi X.(r\lor\exists x.\LQ X\RQ x)$.
$$\xymatrix@C=20pt@R=15pt{
& \N_1,stopped\ar@{--}[dr] & \N_2,running\ar@{--}[d] & \N_1,dead\ar@{--}[dr] & \N_2,dead\ar@{--}[d]\\
\M: & \txt{$s_0$\\$\neg r$}\ar[r] & \txt{$s_1$\\$\neg r$}\ar[r] & \txt{$s_2$\\$\neg r$}\ar[r] & \txt{$s_3$\\$r$}\\
& \N_1,waiting\ar@{--}[u] & \N_2,stopped\ar@{--}[ul] & \N_1,running\ar@{--}[u] & \N_2,dead\ar@{--}[ul]\\
\N_1: & \txt{$pending$\\$\neg r$}\ar@<1ex>[r] & \txt{$waiting$\\$\neg r$}\ar@<1ex>[l]\ar[r] & \txt{$stopped$\\$\neg r$}\ar@<1ex>[r] & \txt{$running$\\$r$}\ar@<1ex>[l]\ar[r] & \txt{$dead$\\$\neg r$}\\
& \N_3,stopped\ar@{--}[u] & \N_3,running\ar@{--}[u] & \N_3,dead\ar@{--}[u] & \N_3,dead\ar@{--}[u] & \N_3,dead\ar@{--}[u]\\
\N_2: & \txt{$stopped$\\$\neg r$}\ar@<1ex>[r] & \txt{$running$\\$r$}\ar@<1ex>[l]\ar[r] & \txt{$dead$\\$\neg r$}\\
\N_3: & \txt{$stopped$\\$\neg r$}\ar@<1ex>[r] & \txt{$running$\\$r$}\ar@<1ex>[l]\ar[r] & \txt{$dead$\\$\neg r$}
}$$
\end{example}

\begin{example}
We finally present several advanced examples of more complex formulae and briefly explain their intuitive meanings, without providing sample graphs:

\begin{itemize}
    \item $\FOMC$-sentence $\xi X.\forall x.\LQ X\RQ x$ may look like nonsense at first glance, but it is actually valid everywhere. In fact, it simply reflects our assumption that there exists no circular nor infinite genealogical hierarchy, namely, a formal expression of Theorem~\ref{th.model}.
    \item $\FOMC$-sentence $\xi X.(\Box p\land\forall x.\LQ\Diamond X\RQ x)$ says that since the very beginning, the root process keeps to have property $p$, and furthermore as time goes on, property $p$ gradually diffuses down to the children processes layer by layer, so that eventually all the descendant processes will get infected by property $p$. Undoubtedly, such universal kind of spreading pattern --- no matter this property $p$ is a computer virus or anything else --- simply happens everyday and everywhere in our actual computers, network topology, and even human society.
    \item $\FOMC$-sentence $\xi X.(\Diamond\forall y.\LQ\xi Y.(p\land\forall z.\LQ Y\RQ z)\RQ y\land\forall x.\LQ X\RQ x)$ says that for every process, there will be a time when all its descendant processes possess property $p$, \eg, $p$ represents being a dead zombie process so that it can be cleaned up by its parent process~\cite{Tanenbaum15}.
\end{itemize}
\end{example}

\phantom{nothing}

\noindent$\bigstar$ \textbf{Important Clarification} Until now, a variety of pragmatic examples have been vividly exhibited, which should adequately convince readers of first-order modal $\xi$-calculus' robust utility. Very expressive as our logic is, readers may feel dissatisfied about its high complexity and so keep wondering whether the same work can get accomplished by other simpler logics as well. Definitely the answer is no, for expressivity must always be gained at a cost of simplicity, and there is no free lunch. Let us demonstrate this claim through briefly comparing our first-order modal $\xi$-calculus with a number of well-established modal logics.

To begin with, perhaps as one possible alternative to first-order modal $\xi$-calculus, certain sort of multi-modal logic might be suggested, \eg, viewing the instantaneous parent-child relation between states of two processes as a new modality while also introducing another global modality for the $\forall x.$ quantifier. Such kind of multi-modal logic may capture a decent fragment of first-order modal $\xi$-calculus, nonetheless as illustrated in Example~\ref{exa.de}, the crucial \textit{de-dicto-de-re} distinction is known to be uniquely characteristic to first-order modal logic~\cite{Fitting02} and thus cannot get expressed in any multi-modal logic without first-order quantifiers.

Hence next, readers might think about certain transformed type of first-order modal logic, such as the well-known term-modal logic~\cite{Fitting01}. Nevertheless one crucial difference has to be noticed: whereas term-modal logic is based on \textit{epistemic} logic and interprets the Kripke model as a form of knowledge representation, our first-order modal $\xi$-calculus is generally based on \textit{temporal} logic instead and so the binary relation $R$ in a genealogical Kripke model stands for the time order. Such a fundamental divergence effectively makes these two logics totally uncomparable, since they are meant to depict irrelevant phenomena and have their separate uses in reality. In one word as far as we can see, our first-order modal $\xi$-calculus is a quite novel formalization of recursive structures like genealogical Kripke models and thus lies beyond any other existing modal logics.

\section{Bisimulation}\label{sec.bis}

As a quite powerful tool for studying modal logic, the concept of bisimulation plays a particularly significant role over fields related to theoretical computer science, such as process algebra~\cite{Ponse95}. In this section, we propose a bisimulation notion for first-order modal $\xi$-calculus and prove the bisimulation theorem in two directions, namely in one direction bisimulation implies logical equivalence, and in the other direction logical equivalence implies bisimulation under the Hennessy-Milner property~\cite{Blackburn01}.

\begin{definition}[Pointed Genealogical Kripke Model]
A pointed genealogical Kripke model $\M,s$ is a genealogical Kripke model $\M$ with a fixed state $s\in S$.
\end{definition}

The following Definition~\ref{def.bis} provides the notion of bisimulation between two pointed genealogical Kripke models. Before such formal mathematical definition comes, however, we would like to depict an intuitive picture. Now that first-order-like Kripke models are dealt with, say two models $\M$ and $\N$ are bisimilar, then we not only have to designate which world in $\M$ is bisimilar to which world in $\N$ through a binary relation $Z\subseteq S_\M\times S_\N$, but also need to prescribe the correspondence between children models in $\M$'s and $\N$'s domains, viz. $N_\M$ and $N_\N$. Thus suppose $(u,v)\in Z$, then for this pair of counterparts, we need to know \textit{at present} which child model in $N_\M$ corresponds to which child model in $N_\N$ through a similar binary relation $f((u,v))\subseteq N_\M\times N_\N$. Nevertheless, there is no reason to prevent the valuation of $f$ from changing according to the pair $(u,v)$, hence generally speaking, with respect to some fixed $Z$, $f$ should then be a function from $Z$ to $\P(N_\M\times N_\N)$. Therefore the detailed definition goes as the following:

\begin{definition}[Bisimilar Pointed Models]\label{def.bis}
Two pointed models $\M,s$ and $\N,t$ (with the same fixed Language $\FOMC$) are bisimilar, iff there exist a binary relation $Z\subseteq S_\M\times S_\N$ and a function $f:Z\to\P(N_\M\times N_\N)$ such that:

\begin{itemize}
    \item $(s,t)\in Z$.
    \item $\forall(u,v)\in Z\forall p\in\BP$, $u\in V_\M(p)$ iff $v\in V_\N(p)$.
    \item $\forall(u,v)\in Z\forall\M'\in N_\M$, $\exists\N'\in N_\N$ such that $(\M',\N')\in f((u,v))$.
    \item $\forall(u,v)\in Z\forall\N'\in N_\N$, $\exists\M'\in N_\M$ such that $(\M',\N')\in f((u,v))$.
    \item $\forall(u,v)\in Z\forall(\M',\N')\in f((u,v))$, $\M',T_\M(u,\M')$ and $\N',T_\N(v,\N')$ are bisimilar.
    \item $\forall(u,v)\in Z\forall c\in\BC$, either one of the following holds:
    \begin{enumerate}
        \item both $I_\M(u,c)$ and $I_\N(v,c)$ are undefined;
        \item both $I_\M(u,c)$ and $I_\N(v,c)$ are defined, and moreover, $I_\M(u,c),T_\M(u,I_\M(u,c))$ and $I_\N(v,c),T_\N(v,I_\N(v,c))$ are bisimilar.
    \end{enumerate}
    \item $\forall(u,v)\in Z\forall u'\in S_\M$ if $uR_\M u'$, then $\exists v'\in S_\N$, $vR_\N v'$, $(u',v')\in Z$, $f((u,v))\subseteq f((u',v'))$.
    \item $\forall(u,v)\in Z\forall v'\in S_\N$ if $vR_\N v'$, then $\exists u'\in S_\M$, $uR_\M u'$, $(u',v')\in Z$, $f((u,v))\subseteq f((u',v'))$.
\end{itemize}
\end{definition}

\begin{remark}
Once again, Definition~\ref{def.bis} is a recursive definition, for in order to check whether $\M,s$ and $\N,t$ are bisimilar, we have to firstly calculate the bisimilar situations between models in $N_\M$ and models in $N_\N$. Nonetheless, just like Theorem~\ref{th.model}, this issue does not really cause any problem here either and thus Definition~\ref{def.bis} is well defined.
\end{remark}

\begin{theorem}\label{th.bis}
If $\M,s$ and $\N,t$ are bisimilar, then for any $\FOMC$-sentence $\phi$, $\M,s\vDash\phi\iff\N,t\vDash\phi$.
\end{theorem}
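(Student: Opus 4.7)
The plan is to prove a strengthened statement by well-founded induction, where the strengthening is needed to let the structural recursion pass through subformulas that have free variables.

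\textbf{Strengthened claim.} Suppose $(Z,f)$ witnesses bisimilarity between $\M$ and $\N$, let $(u,v)\in Z$, let $\phi$ be any $\FOMC$-formula, and let $i_\M,i_\N,j$ be interpretations such that: (a) for every free model variable $x$ of $\phi$, $(i_\M(x),i_\N(x))\in f((u,v))$; and (b) for every free formula variable $X$ of $\phi$, $j(X)$ is an $\FOMC$-formula containing no free model variables. Then $u\in\llbracket\phi\rrbracket^\M_{i_\M,j}$ iff $v\in\llbracket\phi\rrbracket^\N_{i_\N,j}$. Taking $i_\M=i_\N=j=\emptyset$ recovers Theorem~\ref{th.bis}.

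\textbf{Induction scheme and routine cases.} I would use an outer well-founded induction on the maximum depth of generations of $\M$ and $\N$ (well-founded by Theorem~\ref{th.model}), and an inner induction on the structure of $\phi$. The cases of $\top$ and $p$ follow from the propositional-letter clause of Definition~\ref{def.bis}; the boolean cases for $\neg$ and $\land$ are immediate. For $\Box\phi$, the back-and-forth clauses on $R_\M,R_\N$ yield a bisimilar successor $(u',v')\in Z$, and the crucial condition $f((u,v))\subseteq f((u',v'))$ preserves (a) for $i_\M,i_\N$ after the transition, so the inner IH applies. For $\forall x.\phi$, the back-and-forth clauses on $N_\M,N_\N$ match every $\M'\in N_\M$ with some $\N'\in N_\N$ satisfying $(\M',\N')\in f((u,v))$, and vice versa, so we extend $i_\M,i_\N$ to $i_\M[x:=\M'],i_\N[x:=\N']$ and invoke the inner IH. For $\xi X.\phi$, the semantics rewrites to $\llbracket\phi\rrbracket^\M_{i,j[X:=\phi]}$; since $\xi X.\phi$ has no free model variables by Definition~\ref{def.sen}, condition (b) for $j[X:=\phi]$ holds, and the inner IH applies to the strictly smaller $\phi$.

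\textbf{Descent cases.} The substantive cases are $\LQ\phi\RQ x$, $\LQ\phi\RQ c$ and $X$. For $\LQ\phi\RQ x$, condition (a) gives $(i_\M(x),i_\N(x))=(\M',\N')\in f((u,v))$, and the corresponding clause of Definition~\ref{def.bis} tells us that $\M',T_\M(u,\M')$ and $\N',T_\N(v,\N')$ are bisimilar; since $\M'$ and $\N'$ strictly decrease the depth of generations, the outer IH applies. Moreover, $\phi$ has no free model variables (by Definition~\ref{def.sen}), so the reset of $i$ to $\emptyset$ is harmless. The case $\LQ\phi\RQ c$ is analogous, using the constant-symbol clause of Definition~\ref{def.bis} (including the matched undefined/defined cases for $I_\M(u,c)$ and $I_\N(v,c)$). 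For the bare formula variable $X$ we expand $\llbracket X\rrbracket=\llbracket j(X)\rrbracket$; as observed in the proof of Theorem~\ref{th.sem}, $X$ can only be invoked after a descent through some $\LQ\cdot\RQ$, because binding of $X$ requires $X$ to occur inside such a pair, so by the time $X$ is met the current models are strict descendants of the original pair, and the outer IH applies to $j(X)$ (which has no free model variables by hypothesis (b)).

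\textbf{Main obstacle.} The delicate point is harmonising the outer and inner inductions to accommodate the fact that $\xi X.\phi$ unfolds without decreasing depth, yet $X$ can be re-invoked an unbounded number of times. A purely structural induction on formulas fails at $X$ because $j(X)$ may be more complex than $X$; a purely model-depth induction fails at $\xi X.\phi$ itself. The resolution, mirroring the termination argument of Theorem~\ref{th.sem}, is that every bound occurrence of a formula variable lies within some $\LQ\cdot\RQ$, so every unfolding of $X$ is guarded by a strict depth-decreasing step. Making this precise, either by induction on the recursion tree of the semantic evaluation function or by a lexicographic well-founded ordering on (model depth, subformula), is where most of the care must be invested.
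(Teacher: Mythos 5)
Your proposal is correct and follows essentially the same route as the paper's own proof: the same strengthened intermediate-stage claim with matched model-variable interpretations through $f$, the same double induction (outer on the models' maximal depth of generations, inner on formula structure), and the same treatment of the descent cases $\LQ\phi\RQ x$, $\LQ\phi\RQ c$ and the guarded unfolding of formula variables via the termination argument of Theorem~\ref{th.sem}. If anything, you are more explicit than the paper about why the two inductions mesh at the $\xi X.\phi$/$X$ cases, which is a point the paper handles only by reference to Theorem~\ref{th.sem}.
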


\begin{proof}
Suppose $\M,s$ and $\N,t$ are bisimilar with certain fixed $Z$ and $f$, so $(s,t)\in Z$. We inductively prove that, for any possible \textit{intermediate} stages $\llbracket\phi\rrbracket^\M_{i,j}$ and $\llbracket\phi\rrbracket^\N_{k,j}$ when evaluating arbitrary two $\FOMC$-sentences with respect to arbitrary two genealogical Kripke models,\footnote{We do not care about what the \textit{initial} two $\FOMC$-sentences are at the beginning of evaluating; nor do we care about what the \textit{initial} two models are or whether they are just $\M$ or $\N$.} if for every $(x,\M')\in i$, there exists $(x,\N')\in k$ such that $(\M',\N')\in f((s,t))$,\footnote{This condition is \textit{not} asymmetric as it looks like, since $i$ and $k$ are partial functions.} then $s\in\llbracket\phi\rrbracket^\M_{i,j}$ iff $t\in\llbracket\phi\rrbracket^\N_{k,j}$.\footnote{This claim is stronger than the original theorem, because suppose $\phi$ is an $\FOMC$-sentence, then we can simply start evaluating $\phi$ from $\M$ and $\N$ as the initial models, and since the initial stages are also intermediate stages but at the initial stages $i=k=\emptyset$, the condition trivially holds so we have $s\in\llbracket\phi\rrbracket^\M_{\emptyset,\emptyset}$ iff $t\in\llbracket\phi\rrbracket^\N_{\emptyset,\emptyset}$, namely $\M,s\vDash\phi$ iff $\N,t\vDash\phi$.}

As a matter of fact, two nested levels of induction are needed in this proof. The outer level of induction deals with models $\M$'s and $\N$'s maximal depths of generations, whose basic case is when $N_\M=N_\N=\emptyset$. Then, the inner level of induction deals with $\FOMC$-formula $\phi$'s structure. Firstly, at the basic step of the outer level of induction, for the inner level of induction, we only concentrate on those new cases beyond normal propositional modal logic:

\begin{itemize}
    \item Suppose $\phi$ is in the form of $X$. Since the interpretation $j$ for formula variables is the same, and also from proof of Theorem~\ref{th.sem} we know that infinite loop of $X$ will not occur, this case then holds by induction hypothesis of the inner level.
    \item Suppose $\phi$ is in the form of $\LQ\psi\RQ x$. This case is impossible, because now $N_\M=N_\N=\emptyset$ and there is no appropriate interpretation for model variable $x$, contradicting Theorem~\ref{th.sem}.
    \item Suppose $\phi$ is in the form of $\LQ\psi\RQ c$. Also because $N_\M=N_\N=\emptyset$, now both $I_\M(u,c)$ and $I_\N(v,c)$ must be undefined and thus $\llbracket\LQ\psi\RQ c\rrbracket^\M_{i,j}=\llbracket\LQ\psi\RQ c\rrbracket^\N_{k,j}=\emptyset$.
    \item Suppose $\phi$ is in the form of $\forall x.\psi$. Also because $N_\M=N_\N=\emptyset$, we trivially have $\llbracket\forall x.\psi\rrbracket^\M_{i,j}=S_\M$ and $\llbracket\forall x.\psi\rrbracket^\N_{k,j}=S_\N$.
    \item Suppose $\phi$ is in the form of $\xi X.\psi$. This case holds by induction hypothesis of the inner level.
\end{itemize}

Next, we move on to the inductive step of the outer level of induction. For the inner level of induction, we again only mention several technically subtle cases that differ from routine proof:

\begin{itemize}
    \item Suppose $\phi$ is in the form of $\LQ\psi\RQ x$, and $(x,\M')\in i$, $(x,\N')\in k$. Since $(s,t)\in Z$ and $(\M',\N')\in f((s,t))$, from the condition in Definition~\ref{def.bis} we know that $\M',T_\M(s,\M')$ and $\N',T_\N(t,\N')$ are bisimilar, then by induction hypothesis of the outer level we know that $T_\M(s,\M')\in\llbracket\psi\rrbracket^{\M'}_{\emptyset,j}$ iff $T_\N(t,\N')\in\llbracket\psi\rrbracket^{\N'}_{\emptyset,j}$, therefore, $s\in\llbracket\phi\rrbracket^\M_{i,j}$ iff $t\in\llbracket\phi\rrbracket^\N_{k,j}$.
    \item Suppose $\phi$ is in the form of $\LQ\psi\RQ c$. Since $(s,t)\in Z$, from the condition in Definition~\ref{def.bis} we know that either both $I_\M(s,c)$ and $I_\N(t,c)$ are undefined, then we have $s\notin\llbracket\LQ\psi\RQ c\rrbracket^\M_{i,j}$ and $t\notin\llbracket\LQ\psi\RQ c\rrbracket^\N_{k,j}$; or both $I_\M(s,c)$ and $I_\N(t,c)$ are defined, $I_\M(s,c),T_\M(s,I_\M(s,c))$ and $I_\N(t,c),T_\N(t,I_\N(t,c))$ are bisimilar, then by induction hypothesis of the outer level we know that $T_\M(s,I_\M(s,c))\in\llbracket\psi\rrbracket^{I_\M(s,c)}_{\emptyset,j}$ iff $T_\N(t,I_\N(t,c))\in\llbracket\psi\rrbracket^{I_\N(t,c)}_{\emptyset,j}$, therefore, $s\in\llbracket\phi\rrbracket^\M_{i,j}$ iff $t\in\llbracket\phi\rrbracket^\N_{k,j}$.
    \item Suppose $\phi$ is in the form of $\Box\psi$. Since $(s,t)\in Z$, from the condition in Definition~\ref{def.bis} we know that for every $u\in S_\M$ such that $sR_\M u$, there exists $v\in S_\N$ such that $tR_\N v$, $(u,v)\in Z$, $f((s,t))\subseteq f((u,v))$, and vice versa. Hence, fix arbitrary such pair of $(u,v)\in Z$, we only have to show that $u\in\llbracket\psi\rrbracket^\M_{i,j}$ iff $v\in\llbracket\psi\rrbracket^\N_{k,j}$, and because $f((s,t))\subseteq f((u,v))$, for every $(x,\M')\in i$ there exists $(x,\N')\in k$ such that $(\M',\N')\in f((u,v))$, so by induction hypothesis of the inner level the above claim holds, therefore, $s\in\llbracket\phi\rrbracket^\M_{i,j}$ iff $t\in\llbracket\phi\rrbracket^\N_{k,j}$.
    \item Suppose $\phi$ is in the form of $\forall x.\psi$. Since $(s,t)\in Z$, from the condition in Definition~\ref{def.bis} we know that for every $\M'\in N_\M$ there exists $\N'\in N_\N$ such that $(\M',\N')\in f((s,t))$, and vice versa. Hence, fix arbitrary such pair of $(\M',\N')\in f((s,t))$, we only have to show that $s\in\llbracket\psi\rrbracket^\M_{i[x:=\M'],j}$ iff $t\in\llbracket\psi\rrbracket^\N_{k[x:=\N'],j}$, and because $(\M',\N')\in f((s,t))$, for every $(x,\M'')\in i[x:=\M']$ there exists $(x,\N'')\in k[x:=\N']$ such that $(\M'',\N'')\in f((s,t))$, so by induction hypothesis of the inner level the above claim holds, therefore, $s\in\llbracket\phi\rrbracket^\M_{i,j}$ iff $t\in\llbracket\phi\rrbracket^\N_{k,j}$.
\end{itemize}
\end{proof}

Similar to ordinary propositional modal logic~\cite{Hennessy85}, the reverse of Theorem~\ref{th.bis} also holds, provided that both $\M$ and $\N$ are image-finite models. To start with, however, we have to define what an image-finite model is, which is still a recursive definition and actually requires finiteness on two dimensions, namely both horizontally along the binary relation $R$ and vertically along the set of children models $N$:

\begin{definition}[Image-Finite Genealogical Kripke Model]
A genealogical Kripke model $\M$ is image-finite, iff all of the following conditions hold:

\begin{itemize}
    \item For every $s\in S$, the set $\{t\in S\mid sRt\}$ is finite.
    \item The set $N$ is finite.
    \item For every $\N\in N$, the model $\N$ is image-finite.
\end{itemize}
\end{definition}

For proof of the reverse of Theorem~\ref{th.bis}, as a preparation let us firstly focus on the following Lemma~\ref{lem.hm}, which constitutes the major difficulty throughout the entire inductive proof. Just like the diagram method in model theory of first order logic~\cite{Marker02}, here we will also expand the language from $\FOMC$ to $\FOMC'$ by adding a corresponding new constant symbol for every element in the domain, i.e., every child model.

Formally, the expansion works as follows. Suppose $\M$ and $\N$ are two image-finite genealogical Kripke models, $s\in S_\M$, $t\in S_\N$, and for any $\FOMC$-sentence $\phi$, $\M,s\vDash\phi\iff\N,t\vDash\phi$. Let two sets of fresh constant symbols be $\BC_\M=\{c_{\M'}\mid\M'\in N_\M\}$ and $\BC_\N=\{c_{\N'}\mid\N'\in N_\N\}$, we expand the language to $\FOMC'$ by enlarging $\BC'=\BC\cup\BC_\M\cup\BC_\N$, but meanwhile restricting that the appearance of any $c\in\BC_\M\cup\BC_\N$ should not be within the scope of any $\xi X.$ or the scope of any $\LQ\RQ$ pair. Thus, $c$ will never get passed down vertically.

Now suppose there arbitrarily exist two fixed functions $g:N_\M\to N_\N$ and $h:N_\N\to N_\M$, then using these two functions, we naturally view $\M$ and $\N$ also as genealogical Kripke models for the expanded language $\FOMC'$ by letting $I_\M'(u,c_{\M'})=\M'$, $I_\M'(u,c_{\N'})=h(\N')$, $I_\N'(v,c_{\M'})=g(\M')$, $I_\N'(v,c_{\N'})=\N'$ for every $u\in S_\M$, $v\in S_\N$, $c_{\M'}\in\BC_\M$, $c_{\N'}\in\BC_\N$.

\begin{lemma}\label{lem.hm}
Suppose $\M$ and $\N$ are two image-finite genealogical Kripke models, $s\in S_\M$, $t\in S_\N$, and for any $\FOMC$-sentence $\phi$, $\M,s\vDash\phi\iff\N,t\vDash\phi$. Do the expansion as above. Then there exist two functions $g:N_\M\to N_\N$ and $h:N_\N\to N_\M$ such that:

\begin{itemize}
    \item For any natural number $n\in\omega$ and any $sR^n_\M u$, there exists $tR^n_\N v$ so that for any $\FOMC'$-sentence $\phi$, $\M,u\vDash\phi\iff\N,v\vDash\phi$.
    \item For any natural number $m\in\omega$ and any $tR^m_\N v$, there exists $sR^m_\M u$ so that for any $\FOMC'$-sentence $\psi$, $\M,u\vDash\psi\iff\N,v\vDash\psi$.
\end{itemize}
\end{lemma}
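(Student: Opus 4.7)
My plan is to proceed in two stages. I first construct matching functions $g: N_\M \to N_\N$ and $h: N_\N \to N_\M$ from the $\FOMC$-equivalence of $(\M, s)$ and $(\N, t)$ together with image-finiteness, then verify the two forth/back conditions by induction on $n$ using a Hennessy-Milner style argument for the horizontal relation. For the construction: for each $\M' \in N_\M$, I claim there is $\N' \in N_\N$ such that the pointed models $(\M', T_\M(s, \M'))$ and $(\N', T_\N(t, \N'))$ are $\FOMC$-equivalent. Suppose not; enumerate $N_\N = \{\N_1, \dots, \N_k\}$ (finite by image-finiteness), and for each $i$ pick an $\FOMC$-sentence $\phi_i$ holding at $(\M', T_\M(s, \M'))$ but not at $(\N_i, T_\N(t, \N_i))$ (negating if needed). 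Then $\exists x.\LQ\bigwedge_i \phi_i\RQ x$ holds at $(\M, s)$, witnessed by $\M'$, but fails at $(\N, t)$, contradicting the hypothesis. Let $g(\M')$ be any such $\N'$, and define $h$ symmetrically.

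With $g, h$ fixed, I induct on $n$ to show the stronger claim: for every $u$ with $sR^n_\M u$, there exists $v$ with $tR^n_\N v$ such that $\M, u$ and $\N, v$ are fully $\FOMC'$-equivalent; the first bullet of the lemma follows immediately, and the second is symmetric using $h$. At the base case $n = 0$, I verify $\FOMC'$-equivalence at $(s,t)$ by a sub-induction on the structure of the $\FOMC'$-sentence: the crucial new atomic cases $\LQ\psi\RQ c_{\M'}$ and $\LQ\psi\RQ c_{\N'}$, where $\psi$ is necessarily an $\FOMC$-sentence by the restriction on $\BC_\M \cup \BC_\N$, follow directly from Stage~1; the Boolean and $\xi X.$ cases are routine; the $\forall x.\psi$ case combines Stage~1's children matching via both $g$ and $h$ with a further induction on $\psi$. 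At the inductive step $n \to n+1$, given $u'$ with $sR^{n+1} u'$, I decompose as $sR^n u \to u'$, apply the IH to obtain matched $v$ with $\M, u$ and $\N, v$ fully $\FOMC'$-equivalent, and then use image-finiteness of $R_\N$ at $v$ together with the standard Hennessy-Milner argument: if no $R_\N$-successor $v'$ of $v$ satisfies the claim, each of the finitely many successors $v'_1, \dots, v'_m$ is distinguished from $u'$ by some $\FOMC'$-sentence $\phi_i$, and the conjunction $\bigwedge_i \phi_i$ separates $(\M, u)$ from $(\N, v)$ via $\Diamond$, contradicting the IH.

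The main technical obstacle is the apparent circularity between the base case $n = 0$ and the inductive step: verifying full $\FOMC'$-equivalence at $(s, t)$ already requires matching at horizontal successors, which is precisely what the inductive step provides, since $\FOMC'$-sentences may embed $\Box, \Diamond, \forall x.$ and $\xi X.$ operators to arbitrary depth. I plan to resolve this by a well-founded simultaneous induction whose primary parameter is the modal depth $d$ of the $\FOMC'$-sentence and whose secondary parameter is the level $n$, ordered lexicographically with $d$ primary: the modal and quantifier cases at depth $d$, level $n$ reduce to matching at level $n + 1$ on subformulas of depth strictly less than $d$, while Stage~1 handles the depth-$0$, level-$n$ atomic cases cleanly, so the whole recursion descends on $(d, n)$ without circularity.
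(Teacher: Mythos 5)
Your Stage~1 is where the argument breaks. You choose $g(\M')$ so that the pointed children $\M',T_\M(s,\M')$ and $g(\M'),T_\N(t,g(\M'))$ agree on all $\FOMC$-sentences \emph{at the root pair $(s,t)$ only}, and then you lean on this to discharge the atomic cases $\LQ\psi\RQ c_{\M'}$ throughout the induction. But the lemma needs these atoms to match at every zig-zag pair $(u,v)$ with $sR^n_\M u$ and $tR^n_\N v$: the truth of $\LQ\psi\RQ c_{\M'}$ at $\M,u$ versus $\N,v$ compares $T_\M(u,\M')$ with $T_\N(v,g(\M'))$, and nothing in your construction controls these successor-state correlations. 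Your claim that ``Stage~1 handles the depth-$0$, level-$n$ atomic cases'' is therefore false for $n>0$. Concretely, if $\M$ has children $\M_1,\M_2$ and $\N$ has children $\N_1,\N_2$, all four pointed at the root states being pairwise $\FOMC$-equivalent, Stage~1 may legally set $g(\M_1)=\N_2$ even though only the pairing $\M_1\mapsto\N_1$ tracks correctly at the $R$-successors; then $\Diamond\LQ p\RQ c_{\M_1}$ already separates $\M,s$ from $\N,t$ in $\FOMC'$. Worse, even a ``correct'' childwise choice cannot handle sentences that correlate several named children across the horizontal modality, such as $\Diamond(\LQ p\RQ c_{\M_1}\land\LQ q\RQ c_{\M_2})$: transferring these from $\M$ to $\N$ via the hypothesis only yields $\exists x.\exists y.\Diamond(\LQ p\RQ x\land\LQ q\RQ y)$, whose witnesses need not be $g(\M_1)$ and $g(\M_2)$. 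This is precisely why the paper does not construct $g,h$ pointwise: it argues by contradiction over \emph{all} finitely many candidate pairs $(g,h)$ at once, packages the would-be distinguishing formulas into a single sentence $\bigvee_h(\bigwedge_g\cdots)$, and converts it back to an $\FOMC$-sentence with the prefix $\exists\vec{c}_{\M'}.\forall\vec{c}_{\N'}$ so that the hypothesis can be applied. That quantifier manoeuvre is the heart of the proof and is absent from your proposal.

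Your proposed lexicographic induction on $(d,n)$ does not repair this. To establish $s\sim_d t$ (agreement on $\FOMC'$-sentences of modal depth at most $d$), the $\Diamond\psi$ case needs a successor $v'$ of $t$ matching a given successor $u'$ of $s$ on depth-$(d-1)$ sentences; the Hennessy--Milner argument producing such a $v'$ refutes its nonexistence by exhibiting a depth-$d$ sentence $\Diamond\bigwedge_i\phi_i$ separating $s$ from $t$ --- i.e., it \emph{presupposes} $s\sim_d t$, the very statement being proved. The descent you describe (depth $d$, level $n$ reduces to depth $d-1$, level $n+1$) is valid only for \emph{verifying} agreement once the matched $v$ is in hand; it does not produce the matching, and the matching for $\FOMC'$ cannot be extracted from the $\FOMC$-equivalence hypothesis without the global $\exists\forall$ transfer over $g$ and $h$. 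Once full $\FOMC'$-equivalence of $(\M,s)$ and $(\N,t)$ is available, your inductive step $n\to n+1$ is the standard image-finiteness argument and is fine; the gap is entirely in getting that base equivalence, which is what Lemma~\ref{lem.hm} is really about.
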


\begin{proof}
Consider the nontrivial situation when $N_\M\neq\emptyset$ and $N_\N\neq\emptyset$. We prove by contradiction. Suppose not, namely, for every possible functions $g:N_\M\to N_\N$ and $h:N_\N\to N_\M$, either one of the following two cases holds:\footnote{In the following, $\bigwedge\limits_v$ denotes conjunction over all possible $v$, and the same for $u$, $g$ and $h$, all of which only have finitely many possibilities due to image-finiteness. $\phi$ and $\psi$ generally depend on parameters $v$, $u$, $g$ and $h$, but we just omit too many subscripts for neatness.}

\begin{itemize}
    \item There exist some natural number $n\in\omega$ and $sR^n_\M u$ such that for any $tR^n_\N v$, there exists some $\FOMC'$-sentence $\phi$ so that $\M,u\vDash\phi\nLeftrightarrow\N,v\vDash\phi$, and without loss of generality, we may assume $\M,u\vDash\phi$ and $\N,v\nvDash\phi$. Since the possible number of all the different $v$ is finite, by conjunction we have $\M,s\vDash\Diamond^n\bigwedge\limits_v\phi$ as $sR^n_\M u$, but $\N,t\nvDash\Diamond^n\bigwedge\limits_v\phi$ as $tR^n_\N v$ for every $v$.
    \item There exist some natural number $m\in\omega$ and $tR^m_\N v$ such that for any $sR^m_\M u$, there exists some $\FOMC'$-sentence $\psi$ so that $\M,u\vDash\psi\nLeftrightarrow\N,v\vDash\psi$, and without loss of generality, we may assume $\M,u\nvDash\psi$ and $\N,v\vDash\psi$. Since the possible number of all the different $u$ is finite, by conjunction we have $\N,t\vDash\Diamond^m\bigwedge\limits_u\psi$ as $tR^m_\N v$, but $\M,s\nvDash\Diamond^m\bigwedge\limits_u\psi$ as $sR^m_\M u$ for every $u$, namely, $\M,s\vDash\neg\Diamond^m\bigwedge\limits_u\psi$ but $\N,t\nvDash\neg\Diamond^m\bigwedge\limits_u\psi$.
\end{itemize}

To sum up, as $\M$ is not affected by $g$, we can actually conclude that for every possible functions $g$ and $h$, we always have $\M,s\vDash\bigvee\limits_h(\bigwedge\limits_g\Diamond^n\bigwedge\limits_v\phi\land\bigwedge\limits_g\neg\Diamond^m\bigwedge\limits_u\psi)$. This is an $\FOMC'$-sentence, in the front of which we can add existential quantifiers binding every $c_{\M'}\in\BC_\M$ as well as universal quantifiers binding every $c_{\N'}\in\BC_\N$ so as to convert it back to an $\FOMC$-sentence, namely $\M,s\vDash\exists\vec{c}_{\M'}.\forall\vec{c}_{\N'}.\bigvee\limits_h(\bigwedge\limits_g\Diamond^n\bigwedge\limits_v\phi\land\bigwedge\limits_g\neg\Diamond^m\bigwedge\limits_u\psi)$, and since $\M,s\vDash\phi\iff\N,t\vDash\phi$ for any $\FOMC$-sentence $\phi$, we obtain $\N,t\vDash\exists\vec{c}_{\M'}.\forall\vec{c}_{\N'}.\bigvee\limits_h(\bigwedge\limits_g\Diamond^n\bigwedge\limits_v\phi\land\bigwedge\limits_g\neg\Diamond^m\bigwedge\limits_u\psi)$, and so back again to $\FOMC'$-sentence we can conclude that there exists some $g$ such that $\N,t\vDash\bigvee\limits_h(\bigwedge\limits_g\Diamond^n\bigwedge\limits_v\phi\land\bigwedge\limits_g\neg\Diamond^m\bigwedge\limits_u\psi)$. However, because $\N$ is not affected by $h$, for each disjunctive branch here we always have $\N,t\nvDash\bigwedge\limits_g\Diamond^n\bigwedge\limits_v\phi\land\bigwedge\limits_g\neg\Diamond^m\bigwedge\limits_u\psi$ for any $g$, thus a contradiction.
\end{proof}

At last, we can state the reverse of Theorem~\ref{th.bis}, usually called Hennessy-Milner Theorem, as the following:

\begin{theorem}[Hennessy-Milner]\label{th.hm}
If $\M$ and $\N$ are two image-finite genealogical Kripke models, $s\in S_\M$, $t\in S_\N$, and for any $\FOMC$-sentence $\phi$, $\M,s\vDash\phi\iff\N,t\vDash\phi$. Then $\M,s$ and $\N,t$ are bisimilar.
\end{theorem}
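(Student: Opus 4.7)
The plan is to construct the bisimulation witnesses $Z$ and $f$ of Definition~\ref{def.bis} by a nested induction: an outer induction on the well-founded generational depth of $\M$ and $\N$ guaranteed by Theorem~\ref{th.model}, and, inside each inductive step, a standard image-finite Hennessy-Milner back-and-forth argument carried out inside the expanded language $\FOMC'$ defined just before Lemma~\ref{lem.hm}. The base case, $N_\M = N_\N = \emptyset$, collapses to ordinary propositional modal logic on $R$, and is handled by the classical Hennessy-Milner construction with $f$ empty.

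For the inductive step I would first apply Lemma~\ref{lem.hm} to extract functions $g : N_\M \to N_\N$ and $h : N_\N \to N_\M$ and then interpret $\M$, $\N$ as $\FOMC'$-models via the prescribed $I'_\M$, $I'_\N$. The natural candidate for the bisimulation relation is
\[
    Z = \{(u,v) \in S_\M \times S_\N \mid \M,u \text{ and } \N,v \text{ satisfy exactly the same } \FOMC'\text{-sentences}\},
\]
and I would take $f$ to be the constant function
\[
    f((u,v)) = \{(\M', g(\M')) \mid \M' \in N_\M\} \cup \{(h(\N'), \N') \mid \N' \in N_\N\}.
\]
The pair $(s,t)$ lies in $Z$ by the $n=0$ clause of Lemma~\ref{lem.hm}, and the inclusion $f((u,v)) \subseteq f((u',v'))$ required by Definition~\ref{def.bis} is trivial because $f$ is constant.

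The propositional, forth-for-children and back-for-children clauses then read off immediately. The critical use of the outer induction hypothesis appears in verifying, for each $(u,v) \in Z$ and each $\M' \in N_\M$, that $\M', T_\M(u, \M')$ and $g(\M'), T_\N(v, g(\M'))$ are bisimilar: from $(u,v) \in Z$ together with $I'_\M(u, c_{\M'}) = \M'$ and $I'_\N(v, c_{\M'}) = g(\M')$, the atomic $\FOMC'$-sentences of the form $\LQ\phi\RQ c_{\M'}$ (for $\phi$ any $\FOMC$-sentence) force
\[
    T_\M(u, \M') \in \llbracket\phi\rrbracket^{\M'}_{\emptyset,\emptyset} \iff T_\N(v, g(\M')) \in \llbracket\phi\rrbracket^{g(\M')}_{\emptyset,\emptyset},
\]
so the two child pointed models are $\FOMC$-equivalent; they are image-finite and of strictly smaller generational depth, hence the outer induction hypothesis delivers the desired bisimulation. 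The $h$-side and the genuine constant-symbol clause (using $c \in \BC$ in place of $c_{\M'}$) are handled identically. The horizontal back-and-forth clauses are treated by the standard Hennessy-Milner move within $\FOMC'$: given $(u,v) \in Z$ and $uR_\M u'$, if no $R_\N$-successor of $v$ were $\FOMC'$-equivalent to $u'$, image-finiteness of $\{v' \mid vR_\N v'\}$ would let me conjoin finitely many distinguishing $\FOMC'$-sentences and prefix a $\Diamond$, producing an $\FOMC'$-sentence separating $\M,u$ from $\N,v$ and contradicting $(u,v) \in Z$.

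The main obstacle I anticipate is the careful separation of the two equivalences: the states of $\M$ and $\N$ themselves must be matched at the finer $\FOMC'$-level so that the horizontal step keeps working once $g, h$ are baked into the interpretation of the new constants, whereas each pointed child model only needs to be matched at the coarser $\FOMC$-level, because the restriction imposed on $\FOMC'$ forbids $c_{\M'}$, $c_{\N'}$ from descending through any $\LQ\cdot\RQ$ pair or $\xi X.$ operator. The bridge between the two levels is precisely the family of atomic formulae $\LQ\phi\RQ c_{\M'}$, and one has to verify that these genuinely lie in $\FOMC'$ and that their truth value really decouples into an $\FOMC$-statement about the child, so that the outer induction hypothesis is applicable. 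Once this bookkeeping is in place, all clauses of Definition~\ref{def.bis} are satisfied and the theorem follows.
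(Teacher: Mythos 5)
Your proposal is correct and follows essentially the same route as the paper: invoke Lemma~\ref{lem.hm} to fix $g$ and $h$, take $Z$ to be the set of $\FOMC'$-equivalent pairs, set $f$ constantly equal to $g\cup h^{-1}$, and discharge the vertical clauses by the outer induction on generational depth, using the atomic formulae $\LQ\phi\RQ c_{\M'}$ as the bridge from $\FOMC'$-equivalence of the parents to $\FOMC$-equivalence of the pointed children. The only differences are cosmetic --- you spell out the horizontal image-finite back-and-forth step and the decoupling argument explicitly, where the paper compresses them into ``it is then not difficult to check.''
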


\begin{proof}
According to Lemma~\ref{lem.hm}, fix two functions $g:N_\M\to N_\N$ and $h:N_\N\to N_\M$, let $Z$ be the set of all ``zig-and-zag'' $(u,v)$ pairs described in the two clauses of conclusion in Lemma~\ref{lem.hm}, and let $f((u,v))=g\cup h^{-1}$ for every $(u,v)\in Z$. We still make use of inductive proof with respect to models $\M$'s and $\N$'s maximal depths of generations, and so as induction hypothesis, we also know that this theorem already holds between any pair of $\M',I_\M(u,\M')$ and $\N',I_\N(v,\N')$, where $(u,v)\in Z$, $\M'\in N_\M$ and $\N'\in N_\N$. For any $\FOMC$-sentence $\phi$, by Lemma~\ref{lem.hm} we have $\M,u\vDash\LQ\phi\RQ c_{\M'}\iff\N,v\vDash\LQ\phi\RQ c_{\M'}$, namely $\M',I_\M(u,\M')\vDash\phi\iff g(\M'),I_\N(v,g(\M'))\vDash\phi$, hence by induction hypothesis, $\M',I_\M(u,\M')$ and $g(\M'),I_\N(v,g(\M'))$ are bisimilar, and similarly $h(\N'),I_\M(u,h(\N'))$ and $\N',I_\N(v,\N')$ are bisimilar, too. It is then not difficult to check that such $Z$ and $f$ are indeed a bisimulation between $\M,s$ and $\N,t$.
\end{proof}

\section{Conclusions and Future Work}\label{sec.conc}

This paper proposes first-order modal $\xi$-calculus, a logic for expressively describing genealogical Kripke models. With illustration of several vivid examples, we have witnessed how genealogical Kripke models naturally depict hierarchic and concurrent practical phenomena --- particularly like computer processes, as well as how sentences of first-order modal $\xi$-calculus succinctly but exactly capture lots of most interesting properties of the Kripke structure. The intuitive picture is clear to understand, while mathematically speaking, the logic itself is also technically challenging, which can be perceived through our effortful management in order to precisely characterize the logic's expressivity through a bisimulation notion.

Nonetheless, this paper is still no more than a piece of primitive work, based on which we would be glad to suggest some plausible future research directions:

\begin{itemize}
    \item Despite that genealogical Kripke models form a rather novel kind of shape, we can still attempt to capture part of its features through other formal languages, which will then be put onto a thorough comparison in contrast to our first-order modal $\xi$-calculus, investigating overall aspects such as expressivity, succinctness, standard translation as well as characterization theorem. It should also be worthwhile trying to combine first-order modal $\xi$-calculus with modal $\mu$-calculus into an integrated framework, and studying them all together from a united algebraic perspective.
    \item Being called a sort of logical calculus, sound and complete proof systems need to be further established for our first-order modal $\xi$-calculus, \eg, trying a tableau-kind system might be a good point to start~\cite{Agostino99}. Similar to normal modal logic, we may also reasonably anticipate that different logics dwell on different frame classes, or under different additional restrictions over genealogical Kripke models, \eg, it sounds tempting to naturally consider various heritage correlations between the parent and the children. Moreover, different proof systems can also possess different computational properties, including decidability, complexity, model checking and so on, which should all prove quite essential in future study and practice as this logic's utility in computer science has been heavily suggested.
    \item Last but not least, it seems a quite promising and exciting approach to extend the definition of genealogical Kripke models as well as the corresponding semantics, \eg, what if a model $\M$ is allowed to reflexively refer to itself as one of its own children models, namely $\M\in N_\M$? As shown by Theorem~\ref{th.model}, a non-standard-non-recursive definition is necessary, and so tight relevance to non-well-founded set theory~\cite{Aczel88}, as well as process algebra which has developed very fruitful in theoretical computer science~\cite{Sangiorgi11}, could then be reasonably expected. Another plausible extension of genealogical Kripke models might be to loosen the restriction on constant domain~\cite{Fitting98}, since in reality, construction of new processes as well as destruction of dead zombie ones are just taking place constantly.
\end{itemize}

\section*{Acknowledgment}

\noindent The author owes much thank to Satoshi Tojo and Mizuhito Ogawa for helpful advice about choosing appropriate notations and writing more clearly. A handful of anonymous reviewers on previous versions of the manuscript have also provided tremendous valuable suggestions to help the author improve this paper.

\end{document}